\providecommand{\U}[1]{\protect\rule{.1in}{.1in}}
\newtheorem{theorem}{Theorem}
\newtheorem{definition}{Definition}
\newtheorem{example}{Example}
\newtheorem{proposition}{Proposition}
\numberwithin{equation}{section}
\definecolor{MyDarkBlue}{rgb}{0,0.08,0.45}
\definecolor{cites}{HTML}{324b13}
\definecolor{links}{HTML}{1a663b}
\definecolor{MyLightMagenta}{cmyk}{0.1,0.8,0,0.1}
\begin{document}
\title{Firm-worker hypergraphs}
\author{Chao Huang\thanks{Institute for Social and Economic Research, Nanjing Audit University. Email: huangchao916@163.com.}}
\date{}
\maketitle

\begin{abstract}
A firm-worker hypergraph consists of edges in which each edge joins a firm and its possible employees. We show that a stable matching exists in both many-to-one matching with transferable utilities and discrete many-to-one matching when the firm-worker hypergraph has no nontrivial odd-length cycles. Firms' preferences satisfying this condition arise in a problem of matching specialized firms with specialists.
\end{abstract}

\textit{Keywords}: two-sided matching; many-to-one matching; stability; complementarities; balanced hypergraph; firm-worker hypergraph

\textit{JEL classification}: C62, D47, D51

\section{Introduction}\label{Sec_intro}

The problem of two-sided matching considers how to match two groups of agents, such as men and women, workers and firms, and students and schools. A key solution concept is stable matching, which excludes incentives for agents to rematch in the market. The literature on two-sided matching has developed into two parallel lines. The problem of matching with transferable utilities originated with the assignment problem (e.g., \citealp{KB57} and \citealp{SS71}). This literature assumes continuous monetary transfers between two market sides and agents have transferable utilities. We can also view a many-to-one matching market with continuous transfers as an exchange economy in which workers sell labors to firms. When agents' valuations are gross-substitute, stable matchings exist in two-sided matching (\citealp{KC82}), and equilibria exist in exchange economies with indivisible goods (\citealp{GS99}). The discrete matching problem originated with the stable marriage problem (\citealp{GS62}). This literature assumes no monetary transfers, discrete wages, or discrete contracts between two market sides. When firms have substitutable preferences, stable matchings exist in discrete matching (\citealp{RS90}), matching with contracts (\citealp{HM05}), and matching with externalities (\citealp{PY23}).

In many real-life markets, complementarities are prevalent and have become a critical issue for market practice. In an exchange economy with transferable utilities, \cite{GS99} showed that any condition ensuring an equilibrium and including unit demands cannot allow complementarities. \cite{HK08} showed that this is also the case for discrete matching without contracts. In the context of matching with contracts, their bilateral and unilateral substitutes conditions (\citealp{HK10}) are imposed on individuals and guarantee the existence of stable matchings. However, in basic settings of both matching with transferable utilities and discrete matching without contracts, we cannot expect a general condition imposed on individuals and compatible with complementarities. In this paper, we provide a condition on the structure of firms' possible sets of employees for basic settings of matching with transferable utilities and discrete matching. Let us illustrate with two examples below. Consider the following discrete matching market from \cite{CKK19} with two firms $f_1,f_2$, and two workers $w_1,w_2$.
\begin{equation}\label{exam_in1}
\begin{aligned}
&f_1: \{w_1,w_2\}\succ\emptyset \qquad\qquad\qquad\qquad &w_1: &\quad f_1\succ f_2\\
&f_2: \{w_1\}\succ\{w_2\}\succ\emptyset \qquad\qquad\qquad\qquad &w_2: &\quad f_2\succ f_1
\end{aligned}
\end{equation}
This market has no stable matchings. If firm $f_1$ hires both workers, worker $w_2$ will go to firm $f_2$; then, firm $f_1$ will drop worker $w_1$; and then, firm $f_2$ will drop worker $w_2$ and hire worker $w_1$; but we then return to the beginning: both workers will go to firm $f_1$. This cyclic blocking process coincides with the cycle in the following \textbf{firm-worker hypergraph}. A hypergraph is a generalization of a graph in which an edge can join any number of vertices. The firm-worker hypergraph uses edges to join each firm with its possible employees.
\begin{center}
\begin{tikzpicture}[scale=0.5]
    \node (v1) at (-1,0) {};
    \node (v2) at (2,3.46) {};
    \node (v3) at (5,0) {};
    \node (v4) at (2,2) {};
    \node (v5) at (2,-1) {};
    \node (v6) at (2,0) {};

    \begin{scope}[fill opacity=0.8]
    \filldraw[fill=red!70] ($(v2)+(0,0.8)$)
        to[out=0,in=60] ($(v4)$)
        to[out=240,in=0] ($(v1) + (0,-0.8)$)
        to[out=180,in=180] ($(v2) + (0,0.8)$);
    \filldraw[fill=yellow!80] ($(v2)+(0,1)$)
        to[out=180,in=120] ($(v4)$)
        to[out=300,in=180] ($(v3) + (0,-0.8)$)
        to[out=0,in=0] ($(v2) + (0,1)$);
    \filldraw[fill=blue!70] ($(v1)+(-0.6,0)$)
        to[out=90,in=180] ($(v4)+ (0,-0.9)$)
        to[out=0,in=90] ($(v3) + (0.6,0)$)
        to[out=270,in=0] ($(v5)$)
        to[out=180,in=270] ($(v1) + (-0.6,0)$);
    \end{scope}

    \foreach \v in {1,2,3} {
        \fill (v\v) circle (0.1);
    }

    \fill (v1) circle (0.1) node [right] {$w_1$};
    \fill (v2) circle (0.1) node [below left] {$f_2$};
    \fill (v3) circle (0.1) node [left] {$w_2$};
    \fill (v6) circle (0.1) node [below] {$f_1$};
\end{tikzpicture}
\end{center}
\medskip

We then consider the market in which firms and workers have transferable utilities and bargain over continuous wages.  Consider a parallel market of the above one. Firm $f_1$ derives $6$ from hiring both workers and does not want to hire only one worker. Firm $f_2$ values each worker at $4$ in the absence of the other but does not want to hire both. Both workers care only about their wages. This market is adapted from an exchange economy used as illustrative examples in \cite{AWW13} and \cite{BK19}. There are no stable matchings in this market either. If firm $f_1$ hires both workers at an expenditure of no more than 6, then there must be a worker earning a wage no more than 3. This worker will go to firm $f_2$ since firm $f_2$ is willing to pay a wage between 3 and 4, then the other worker will be fired by firm $f_1$ and come to compete for firm $f_2$'s offer. The competition will lead to a market wage of less than 3, and we then return to the beginning: Firm $f_1$ will hire both workers at an expenditure of no more than 6. This market also has the above firm-worker hypergraph.

We call a cycle in the firm-worker hypergraph a nontrivial odd-length cycle if it has an odd number of edges and each edge contains exactly two vertices of the cycle. This paper shows that a stable matching exists in both discrete matching and matching with transferable utilities if the firm-worker hypergraph has no nontrivial odd-length cycles. A hypergraph without nontrivial odd-length cycles is called balanced. This notion is proposed by \cite{B70}. Whether a hypergraph is balanced or not can be tested in polynomial time; see \cite{CCR99}. The above firm-worker hypergraph fails this condition.\footnote{Notice that vertex $f_1$ is not a vertex of the cycle; see Section \ref{Sec_Mct} for the formal definition.}

Our condition is compatible with complementarities. We provide an application of our existence theorems. In many real-life sectors, a firm often concentrates on one technology path; colleges and firms also train students and workers to be specialists rather than generalists. We study a problem of matching specialized firms with specialists, where specializations are captured in a structure called \textbf{technology roadmap}. A technology roadmap is a tree of technologies in which each technology demands a set of workers to implement. Each firm possesses some technologies and has a preference over its technologies, which induces the firm's preference ordering over the sets of workers demanded by the technologies. Workers are called specialists according to the scopes of technologies they undertake. Firms are called specialized according to the scopes of technologies they possess. We show that when firms and workers are specialized in specific forms, the firm-worker hypergraph has no nontrivial odd-length cycles. Hence, stable matchings exist in related matching markets with transferable utilities and discrete matching markets.

\subsection{Related literature}\label{Sec_lit}

Both our results in discrete matching and matching with transferable utilities are due to a property of balanced matrices proposed by \cite{FHO74}. In matching with transferable utilities, the set of stable matchings coincides with the core. Our result in this market follows from the linear programming duality for the Bondareva-Shapley theorem (\citealp{B63} and \citealp{Sh67}). A matching market with transferable utilities can be viewed as a special exchange economy. \cite{BM97} and \cite{TY19} used linear programs to characterize the existence of equilibria in exchange economies.\footnote{\cite{KC82} demonstrated the equivalence between equilibria and stable matchings in matching with transferable utilities. A previous version of our paper used this equivalence and the linear program of \cite{TY19} to prove our result. The current version uses an argument more direct for our purpose.} The Bondareva-Shapley balancedness condition has also been used by \cite{M98} to characterize the existence of equilibria.

Our result in discrete matching follows from the integer feasibility program of \cite{H23a} (henceforth H23a), which is based on the existence theorem of \cite{CKK19}. The latter theorem implies the existence of a stable schedule matching. Our condition guarantees the transformation of a stable schedule matching into a stable (full-time) matching. This method closely relates to that of \cite{NV18,NV19}, who used Scarf's lemma to round a stable fractional matching into a stable integral one of a slightly different market. Our result is also in a similar form as the no-odd-party condition of \cite{T91} in the stable roommate problem and the acyclicity condition of \cite{BH21} in multilateral matching.

Since the studies of \cite{KC82} and \cite{RS90}, parallel existence results have been found in matching with transferable utilities and discrete matching. See, for example, the within-group substitutability and cross-group complementarity studied by \cite{SY06}, \cite{O08}, and \cite{HKNOW13,HKNOW21}, the complementary contracts studied by \cite{RY20}, and the unimodularity conditions studied by \cite{DKM01}, \cite{BK19}, and \cite{H23a}. Our results are independent of previous conditions since our condition only concerns which sets of workers are acceptable to firms whereas previous conditions are related to agents' values or preference orderings of their acceptable sets.

\subsection{Cyclic structure and cyclic blocking process}\label{Sec_cyclic}

Because our analysis is based on sophisticated tools, our proofs do not provide a direct explanation for the existence theorems. Nonetheless, the following argument may provide some intuitions.
\begin{equation}\label{cycle}
\begin{aligned}
&\qquad\text{When no stable matchings exist in a market,\, the agents will}\\
&\text{constantly block the current matching and never arrive at a static}\\
&\text{state.\, Since the possible states of a market are finite,\, the market}\\
&\text{will return to the same state from time to time. Intuitively, a cyclic}\\
&\text{blocking\, process\, should\, correspond\, to\, a\, cyclic\, structure\, in\, the}\\
&\text{preference profile.}
\end{aligned}
\end{equation}
The above two examples illustrate this argument. Our results further imply that neither trivial odd-length cycles\footnote{See cycle (\ref{odd_cycle}) of Example \ref{exam_A} for a trivial odd-length cycle.} nor even-length cycles cause instability. The following example may help us understand this phenomenon. Consider a marriage market with two men $m_1,m_2$, two women $w_1$,$w_2$, and the following preferences.
\begin{equation}\label{exam_in2}
\begin{aligned}
&m_1: \quad w_1\succ w_2 \qquad\qquad\qquad\qquad &w_1: \quad m_2\succ m_1\\
&m_2: \quad w_2\succ w_1 \qquad\qquad\qquad\qquad &w_2: \quad m_1\succ m_2
\end{aligned}
\end{equation}
The firm-worker hypergraph reduces to the following woman-man graph if we view women as firms and men as workers.
\begin{center}
\tikzstyle{man}=[circle,fill=blue!20]
\tikzstyle{woman}=[circle,fill=pink!70]
\begin{tikzpicture}[scale=0.5]

  \node (m1) at (-2.4,0) [man] {$m_1$};
  \node (m2) at (2.4,0) [man] {$m_2$};
  \node (w1) at (0,2.4) [woman] {$w_1$};
  \node (w2) at (0,-2.4) [woman] {$w_2$};

  \foreach \from/\to in {w1/m1,w1/m2}
    \draw (\from) -- (\to) [very thick];

  \foreach \from/\to in {w2/m1,w2/m2}
    \draw (\from) -- (\to) [very thick];

\end{tikzpicture}
\end{center}
There is a cycle of length 4, which gives rise to the following cyclic blocking process: Woman $w_1$ divorces man $m_1$ and marries man $m_2$; man $m_2$ divorces woman $w_1$ and marries woman $w_2$; woman $w_2$ divorces man $m_2$ and marries man $m_1$; man $m_1$ divorces woman $w_2$ and marries woman $w_1$. However, unlike the above two examples, this cycle breaks into a stable marriage in which woman $w_1$ marries man $m_2$, and woman $w_2$ marries man $m_1$ (or woman $w_1$ marries man $m_1$, and woman $w_2$ marries man $m_2$).

One may consider whether our results can be enhanced into the following statement: Stable matchings exist in all markets with a specific firm-worker hypergraph if and only if the firm-worker hypergraph has no nontrivial odd-length cycles. This statement is similar as the characterizations of \cite{BH21} on venture structures in multilateral matching that guarantee outcomes satisfying different stability concepts. The necessity part means that, given a firm-worker hypergraph containing a nontrivial odd-length cycle, there is always a market with this firm-worker hypergraph for which a stable matching does not exist. The necessity part is plausible because, as we have illustrated, an odd-length cycle in the firm-worker hypergrah may cause a blocking process that does not break down. However, we provide an example in the Appendix (see Section \ref{necessity}) to show that this necessity part does not hold for our condition.

\section{Matching with transferable utilities\label{Sec_Mct}}

In this section, we study many-to-one matching with continuous transfers between firms and workers. We assume each firm and worker has a transferable utility.

There is a finite set $F$ of firms and a finite set $W$ of workers. Define $N\equiv F\cup W$. Let ${\o}$ be the null firm, representing not being matched with any firm; and let $\widetilde{F}\equiv F\cup\{{\o}\}$. Let $\mathcal{A}_f\subseteq 2^W\setminus\{\emptyset\}$ be the collection of firm $f$'s \textbf{acceptable sets of workers}. Each firm $f\in F$ has a valuation $\mathrm{v}_f:\mathcal{A}_f\cup\{\emptyset\}\rightarrow \mathbb{R}$. Let $A_w\subseteq F$ be the collection of firm $f$'s \textbf{acceptable firms}. Each worker $w\in W$ has a valuation $\mathrm{v}_w:A_w\cup\{ø\}\rightarrow \mathbb{R}$. For normalization, we assume $\mathrm{v}_f(\emptyset)=\mathrm{v}_w({\o})=0$ for each $f\in F$ and each $w\in W$. Firms and workers have transferable utilities: Given a price $\mathbf{p}\in \mathbb{R}^W$ where $\mathrm{p}(w)$ is the salary for hiring worker $w$, firm $f$'s utility of hiring $S\in \mathcal{A}_f\cup\{\emptyset\}$ is given by $\mathrm{U}_f(S,\mathbf{p})\equiv \mathrm{v}_f(S)-\sum_{w\in S}\mathrm{p}(w)$ for each $f\in F$, and worker $w$'s utility of working for $f\in A_w\cup\{ø\}$ is given by $\mathrm{U}_w(f,\mathbf{p})\equiv \mathrm{v}_w(f)+\mathrm{p}(w)$ for each $w\in W$. Let $\mathrm{v}_F$ and $\mathrm{v}_W$ be the collections of firms' and workers' valuations, repectively. A matching market with transferable utilities can be summarized as a tuple $\Phi=(F,W,\mathrm{v}_F,\mathrm{v}_W)$. A matching is an assignment of workers to firms together with a wage schedule.

\begin{definition}
\normalfont
A \textbf{matching} $(\mu,\mathbf{p})$ consists of a function $\mu:\widetilde{F}\cup W\rightarrow\widetilde{F}\cup 2^W$ and a price $\mathbf{p}\in \mathbb{R}^W$ such that for all $f\in \widetilde{F}$ and all $w\in W$,
\begin{description}
\item[(\romannumeral1)] $\mu(w)\in \widetilde{F}$, and $\mu(w)={\o}$ implies $\mathrm{p}(w)=0$;

\item[(\romannumeral2)] $\mu(f)\in 2^W$;

\item[(\romannumeral3)] $\mu(w)=f$ if and only if $w\in\mu(f)$.
\end{description}
\end{definition}

The concept of stability includes two requirements. First, the matching should be weakly better than staying unmatched for all agents. Second, no firm and group of workers can negotiate an agreement that is weakly better than the current matching for all members involved in the negotiations and strictly better for at least one member.

\begin{definition}\label{stability2}
\normalfont
A matching $(\mu,\mathbf{p})$ is \textbf{stable} if
\begin{description}
\item[(\romannumeral1)] (Individual Rationality) for each worker $w\in W$, $\mu(w)\in A_w\cup\{ø\}$ and $\mathrm{U}_w(\mu(w),\mathbf{p})\geq 0$; for each firm $f\in F$,  $\mu(f)\in \mathcal{A}_f\cup\{\emptyset\}$ and $\mathrm{U}_f(\mu(f),\mathbf{p})\geq 0$;

\item[(\romannumeral2)] (No Blocking Coalition) there are no firm-set of workers combination $(f,S)$ and price $\mathbf{p}'$ such that

    $\mathrm{U}_{f}(S,\mathbf{p}')\geq \mathrm{U}_{f}(\mu(f),\mathbf{p})$,\qquad\quad and

    $\mathrm{U}_w(f,\mathbf{p}')\geq \mathrm{U}_w(\mu(w),\mathbf{p})$\qquad for all $w\in S$

    with strict inequality holding for at least one member of $\{f\}\cup S$.
\end{description}
\end{definition}

Notice that since continuous transfers are allowed, and all agents have transferable utilities, the set of stable matchings coincides with the core. The inequalities in condition (ii) of Definition \ref{stability2} can be replaced with strict inequalities.

A hypergraph is a generalization of a graph in which an edge can join any number of vertices. A \textbf{firm-worker hypergraph} is a hypergraph $(N,\mathcal{E})$ in which each vertex $j\in N$ is either a firm or a worker, and each edge contains one firm and its possible employees:
$\mathcal{E}=\{\{f\}\cup S|f\in F \text{ and } S\in \mathcal{A}_f\}$.\footnote{Since we do not restrict wages to be nonnegative, an acceptable set for a firm will be chosen by the firm under proper wages. Hence, each acceptable set for a firm is the firm's possible set of employees.} A cycle is a cyclic alternating sequence of distinct vertices and edges: $(j^1, E^1, j^2, E^2, ..., j^k, E^k, j^1)$, where $k\geq 2$, $j^i,j^{i+1}\in E^i$ for each $i\in\{1,2,\ldots,k-1\}$, and $j^k,j^1\in E^k$.\footnote{The sequence $(j,E,j)(j\in E, E\in \mathcal{E})$ is not considered to be a cycle.} The number of edges contained in the cycle (i.e., the number $k$) is called the length of the cycle. We call $j^1,\ldots,j^k$ vertices of this cycle; other vertices contained in the edges of the cycle are not vertices of the cycle. A cycle is called a \textbf{nontrivial odd-length cycle} if it has an odd number of edges and every edge contains exactly two vertices of the cycle. 

\begin{theorem}\label{thm_exist1}
\normalfont
A stable matching exists in a matching market with transferable utilities if the firm-worker hypergraph has no nontrivial odd-length cycles.
\end{theorem}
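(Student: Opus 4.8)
The plan is to recast the existence of a stable matching as the existence of an integral optimum of a linear program on the firm-worker hypergraph, and to extract that integrality from the hypergraph's balancedness. Because continuous transfers are allowed and every agent has a transferable utility, a stable matching is precisely a core allocation of the associated assignment game. The surplus that a firm $f$ and an acceptable set $S\in\mathcal{A}_f$ can jointly create is
\[
c_{\{f\}\cup S}\equiv \mathrm{v}_f(S)+\sum_{w\in S}\mathrm{v}_w(f),
\]
which a wage only redistributes between $f$ and the members of $S$. I would attach the weight $c_E$ to each edge $E\in\mathcal{E}$ (retaining only edges on which every $\mathrm{v}_w(f)$ is defined, i.e.\ mutually acceptable pairings; this deletes columns of the incidence matrix and preserves balancedness) and study the fractional matching program
\[
\max\ \sum_{E\in\mathcal{E}} c_E\,y_E \qquad\text{s.t.}\qquad \sum_{E\ni j} y_E\le 1\ \ (j\in N),\quad y_E\ge 0\ \ (E\in\mathcal{E}),
\]
whose constraint matrix $M$ is the vertex-edge incidence matrix of the hypergraph.

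Next I would read stability off the dual program, $\min \sum_{j\in N} x_j$ subject to $\sum_{j\in E} x_j\ge c_E$ for all $E\in\mathcal{E}$ and $x\ge 0$. Here $x_j$ is the payoff of agent $j$: the nonnegativity $x_j\ge0$ is individual rationality, and $x_f+\sum_{w\in S}x_w\ge c_{\{f\}\cup S}$ says exactly that the combined payoff of $f$ and $S$ already matches the surplus they could generate, so $(f,S)$ cannot block. Any optimal dual solution thus supplies stable payoffs, provided they come from a genuine assignment with a wage schedule. To produce that assignment I would pair the dual optimum with an integral primal optimum $y^{*}$, which is the indicator of a matching $\mu$. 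Complementary slackness makes each in-use edge tight, $x_f+\sum_{w\in\mu(f)}x_w=\mathrm{v}_f(\mu(f))+\sum_{w\in\mu(f)}\mathrm{v}_w(f)$, and forces $x_j=0$ for every unmatched agent. Defining $\mathrm{p}(w)\equiv x_w-\mathrm{v}_w(\mu(w))$ for matched workers and $\mathrm{p}(w)\equiv 0$ otherwise then yields $\mathrm{U}_w(\mu(w),\mathbf{p})=x_w$ and, by tightness, $\mathrm{U}_f(\mu(f),\mathbf{p})=x_f$, so $(\mu,\mathbf{p})$ satisfies both parts of Definition \ref{stability2}.

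The crux is therefore securing an integral primal optimum; note that the dual need not be integral. A square submatrix of $M$ with exactly two ones in every row and every column is precisely the incidence matrix of a cycle in which each edge meets the cycle in two vertices, so $M$ admits no such submatrix of odd order if and only if the hypergraph has no nontrivial odd-length cycle---that is, $M$ is a balanced $0$-$1$ matrix in the sense of \cite{B70}. I would then invoke the property of balanced matrices established by \cite{FHO74}: when $M$ is balanced the packing polytope $\{y\ge 0 : My\le \mathbf{1}\}$ is integral, so the program above attains its maximum at an integral vertex, delivering the matching $\mu$ needed above. I expect the main difficulty to sit in this step---pinning down the exact dictionary between nontrivial odd-length cycles and the forbidden odd submatrices, and checking that the \cite{FHO74} integrality is the packing (not the covering) version required here---after which strong duality and the complementary-slackness bookkeeping of the previous paragraph complete the argument.
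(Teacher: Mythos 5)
Your proposal is correct and follows essentially the same route as the paper: both reduce existence to LP duality between a surplus-maximization program over the firm-worker hypergraph and a covering program whose optimal solution supplies the stable payoffs, with integrality extracted from balancedness via \cite{FHO74}. The only cosmetic difference is that you work with the packing polytope $\{y\ge 0: My\le \mathbf{1}\}$ while the paper adds singleton columns and uses the partitioning polytope $\{\delta\ge 0:\sum_S\delta_S\chi_S=\mathbf{1}\}$; these are interchangeable here (append unit slack columns, which preserves balancedness), and the packing-integrality you invoke does hold for balanced matrices, so the step you flag as the main worry goes through.
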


\begin{example}\label{exam_A}
\normalfont
Consider a market with three firms, three workers, and the following profile of firms' valuations. Worker sets not listed are unacceptable for the firms.
\begin{equation}\label{exam_balancedA}
\begin{aligned}
&\mathrm{v}_{f_1}(\{w_1,w_2\})=3\quad &\mathrm{v}_{f_1}(\{w_1\})=1\qquad &\mathrm{v}_{f_1}(\emptyset)=0\\
&\mathrm{v}_{f_2}(\{w_1\})=2\quad &\mathrm{v}_{f_2}(\{w_3\})=1\qquad &\mathrm{v}_{f_2}(\emptyset)=0\\
&\mathrm{v}_{f_3}(\{w_2,w_3\})=2\quad &\mathrm{v}_{f_3}(\emptyset)=0\qquad
&\quad
\end{aligned}
\end{equation}
A hypergraph without nontrivial odd-length cycles is also called \textbf{balanced}. This market has the following balanced firm-worker hypergraph.
\begin{center}
\begin{tikzpicture}
    \node (v1) at (0,2) {};
    \node (v2) at (4,2) {};
    \node (v3) at (2,-1) {};
    \node (v4) at (2,2) {};
    \node (v5) at (0,0) {};
    \node (v6) at (3.5,0) {};
\begin{scope}[fill opacity=0.8]
    \filldraw[fill=yellow!70] ($(v1)+(-0.7,0)$)
        to[out=90,in=180] ($(v4)+(0,1.2)$)
        to[out=0,in=90] ($(v2)+(0.5,0)$)
        to[out=270,in=0] ($(v4)+(0,-1.2)$)
        to[out=180,in=270] ($(v1)+(-0.7,0)$);
    \filldraw[fill=red!80] ($(v1)+(-0.5,0)$)
        to[out=90,in=90] ($(v4)+(0.5,0)$)
        to[out=270,in=270] ($(v1)+(-0.5,0)$);
    \filldraw[fill=blue!70] ($(v1)+(0,0.4)$)
        to[out=0,in=0] ($(v5)+ (0,-0.5)$)
        to[out=180,in=180] ($(v1) + (0,0.4)$);
    \filldraw[fill=green!70] ($(v5)+(-0.4,0)$)
        to[out=90,in=90] ($(v3)+ (0.4,0)$)
        to[out=270,in=270] ($(v5) + (-0.4,0)$);
    \filldraw[fill=cyan!70] ($(v2)+(0,0.5)$)
        to[out=180,in=60] ($(v6)+ (-0.6,0)$)
        to[out=240,in=90] ($(v3) + (-0.4,0)$)
        to[out=270,in=250] ($(v6)+ (0.6,0)$)
        to[out=70,in=0] ($(v2) + (0,0.5)$);
\end{scope}
    \fill (v1) circle (0.1) node [below right] {$w_1$};
    \fill (v2) circle (0.1) node [below right] {$w_2$};
    \fill (v3) circle (0.1) node [below left] {$w_3$};
    \fill (v4) circle (0.1) node [left] {$f_1$};
    \fill (v5) circle (0.1) node [above left] {$f_2$};
    \fill (v6) circle (0.1) node [above] {$f_3$};
\end{tikzpicture}
\end{center}
\medskip
There is only one odd-length cycle:
\begin{equation}\label{odd_cycle} (w_1,\{w_1,f_1\},f_1,\{f_1,w_1,w_2\},w_2,\{w_2,f_3,w_3\},w_3,\{w_3,f_2\},f_2,\{f_2,w_1\},w_1)
\end{equation}
in which $\{f_1,w_1,w_2\}$ contains three vertices of the cycle. Therefore, a stable matching always exists in this market. For example, if we assume each worker cares only about her wage (i.e., $\mathrm{v}_w(f)=0$ for each $w$ and each $f$), we obtain a stable matching $(\mu,\mathbf{p})$ by specifying $\mu(f_1)=\emptyset$, $\mu(f_2)=\{w_1\}$, $\mu(f_2)=\{w_2,w_3\}$, and $\mathbf{p}=(2,1,1)$.
\end{example}

For any coalition $S\subseteq N$, let $\chi_S\in\{0,1\}^{N}$ denote the indicator vector of $S$.\footnote{The indicator vector $\chi_S$ of $S$ is a vector in $\{0,1\}^{N}$ such that $\chi_S(i)=1$ if $i\in S$, and $\chi_S(i)=0$ if $i\notin S$.} Let $\mathcal{E}$ denote the set of potential blocking coalitions: $\mathcal{E}\equiv\{\{f\}\cup S|f\in F, S\in\mathcal{A}_f, \text{ and } f\in A_w \text{ for each } w\in S\}$. Let $\mathcal{I}$ be the collection of all singleton coalitions: $\mathcal{I}\equiv\{\{i\}|i\in N\}$. For any $S\in\mathcal{I}\cup\mathcal{E}$, let $\mathrm{V}(S)$ be the \textbf{aggregate value obtained by coalition $S$}: $\mathrm{V}(S)=\mathrm{v}_f(S')+\sum_{i\in S'}\mathrm{v}_i(f)$ if $S=(\{f\}\cup S')\in \mathcal{E}$, and $\mathrm{V}(S)=0$ if $S\in \mathcal{I}$. 

\begin{proof}[Proof of Theorem \ref{thm_exist1}]
Let $\Psi$ be the collection of all $\mu$ functions in which no firm is matched with an unacceptable set, and no worker is matched with an unacceptable firm: $\Psi\equiv\{\mu|\mu(f)\in \mathcal{A}_f\cup\{\emptyset\}$ for each $f\in F$, and $\mu(w)\in A_w\cup\{ø\}$ for each $w\in W\}$. Let $\overline{V}\equiv\max_{\mu\in\Psi}\sum_{f\in F}\mathrm{V}(\{f\}\cup\mu(f))$ be the maximum aggregate value of all agents among all $\mu$ functions from $\Psi$.

Let $\mathbf{x}\in R^N$, and define $\mathrm{\widehat{x}}(S)\equiv\sum_{i\in S}\mathrm{x}(i)$ for each $S\in\mathcal{I}\cup\mathcal{E}$. Consider the following linear program where the decision variable is $\mathbf{x}\in \mathbb{R}^N$.
\begin{align}
&\text{minimize } & \sum_{i\in N}\mathrm{x}(i)\qquad\qquad\qquad\qquad\qquad\qquad\label{min}\\
&\text{subject to } & \mathrm{\widehat{x}}(S)\geq \mathrm{V}(S) \text{ for all }S\in\mathcal{I}\cup\mathcal{E}\quad\qquad\label{constraint1}
\end{align}

Let $\widetilde{X}$ be the value of (\ref{min}). Since each function $\mu$ from $\Psi$ partitions $N$ into coalitions from $\mathcal{I}\cup\mathcal{E}$, constraint (\ref{constraint1}) implies $\widetilde{X}\geq\overline{V}$. If $\widetilde{X}>\overline{V}$, there is no stable matching. If $\widetilde{X}=\overline{V}$, pick a vector $\widetilde{\mathbf{x}}\in \mathbb{R}^N$ that generates $\widetilde{X}$ and a function $\mu$ that generates $\overline{V}$. Suppose $\mu$ partitions $N$ into a set $\mathcal{S}\subset\mathcal{I}\cup\mathcal{E}$ of coalitions, then $\widetilde{X}=\overline{V}$ implies that constraint (\ref{constraint1}) for $\widetilde{\mathbf{x}}$ is binding for each $S\in\mathcal{S}$. We can thus construct a matching $(\mu,\mathbf{p})$ in which each agent $i\in N$ obtains utility $\widetilde{\mathrm{x}}(i)$. Notice that this matching is stable. Therefore, a stable matching exists if and only if the value of (\ref{min}) is $\overline{V}$. The following is the dual program where the decision variables are $\delta_S$ for $S\in \mathcal{I}\cup\mathcal{E}$.
\begin{align}
&\text{maximize } &\sum_{S\in \mathcal{I}\cup\mathcal{E}}\delta_S\mathrm{V}(S)\qquad\qquad\qquad\qquad\qquad\label{max}\\
&\text{subject to } & \sum_{S\in \mathcal{I}\cup\mathcal{E}}\delta_S\chi_S=\mathbf{1} \text{ and}\qquad\qquad\qquad\quad\label{polytope}\\
&\quad& \delta_S\geq0 \text{ for all } S\in \mathcal{I}\cup\mathcal{E}\qquad\qquad\quad\label{nonnegative}
\end{align}

According to the duality theorem (see, e.g., Theorem 4.9 of \citealp{V05}), the objective functions (\ref{min}) and (\ref{max}) have the same value. The above argument was used in proving the Bondareva-Shapley theorem. The difference is that we only concern coalitions that contain at most one firm. We know that a stable matching exists if and only if the value of (\ref{max}) does not exceed $\overline{V}$.

The incidence matrix of a hypergraph is a matrix whose rows and columns represent the vertices and edges of the hypergraph, respectively. An entry of the incidence matrix is 1 (or 0) if the vertex of the entry's row index is (resp. is not) in the edge of the entry's column index. A matrix is called balanced if it is the incidence matrix of a balanced hypergraph. If the firm-worker hypergraph is balanced, the coefficients from the left-hand side of (\ref{polytope}) form a balanced matrix. According to Lemma 2.1 of \cite{FHO74}, all vertices of the polytope defined by (\ref{polytope}) and (\ref{nonnegative}) are integral. Since the polytope is nonempty, the value of (\ref{max}) is attained by a group of integral $(\delta_S)_{S\in \mathcal{I}\cup\mathcal{E}}$ and therefore does not exceed $\overline{V}$.
\end{proof}

\section{Discrete matching\label{Sec_DM}}

This section studies basic discrete many-to-one matching. We mostly follow the notations of \cite{RS90}.

There is a finite set $F$ of firms and a finite set $W$ of workers. Let ${\o}$ be the null firm, representing not being matched with any firm. Each worker $w\in W$ has a strict preference ordering $\succ_w$ over $\widetilde{F}\equiv F\cup\{{\o}\}$. For any two firms $f, f'\in \widetilde{F}$, we write $f\succ_w f'$ when worker $w$ prefers firm $f$ to firm $f'$ according to $\succ_w$. We write $f\succeq_w f'$ if either $f\succ_w f'$ or $f=f'$. Let $\succ_W$ denote the preference profile of all workers. Each firm $f\in F$ has a strict preference ordering $\succ_f$ over $2^W$. For any two subsets of workers $S,S'\subseteq W$, we write $S\succ_f S'$ when firm $f$ prefers $S$ to $S'$ according to $\succ_f$. We write $S\succeq_f S'$ if either $S\succ_f S'$ or $S=S'$. Let $\succ_F$ be the preference profile of all firms. A matching market can be summarized as a tuple $\Gamma=(W,F,\succ_W,\succ_F)$.

Let $\mathrm{Ch}_f$ be the choice function of $f$ such that for any $S\subseteq W$, $\mathrm{Ch}_f(S)\subseteq S$ and $\mathrm{Ch}_f(S)\succeq_f S'$ for any $S'\subseteq S$. By convention, let $\mathrm{Ch}_{{\o}}(S)=S$ for all $S\subseteq W$. For any firm $f\in F$, any worker $w\in W$, and any nonempty set $S\subseteq W$, firm $f$ is called acceptable to worker $w$ if $f\succ_w {\o}$; worker set $S$ is called acceptable to firm $f$ if $S\succ_f\emptyset$; we say that $S$ is \textbf{satisfactory} to firm $f$ if $S=\mathrm{Ch}_f(S)$. For each firm $f\in F$, let $\mathcal{A}_f$ be the collection of firm $f$'s acceptable sets, and $\mathcal{Y}_f$ the collection of firm $f$'s satisfactory sets. We have $\mathcal{Y}_f\subseteq \mathcal{A}_f$: worker set $S$ is acceptable to firm $f$ if $S$ is satisfactory to $f$, but not vice versa. For example, if firm $f$ has the preference $\{w\}\succ\{w,w'\}\succ\emptyset$, then $\{w,w'\}$ is acceptable but not satisfactory to firm $f$. Firm $f$'s possible sets of employees are firm $f$'s satisfactory sets when we take $f$'s individual rationality into account.

\begin{definition}
\normalfont
A \textbf{matching} $\mu$ is a function from the set $\widetilde{F}\cup W$ into $\widetilde{F}\cup 2^W$ such that for all $f\in \widetilde{F}$ and $w\in W$,
\begin{description}
\item[(\romannumeral1)] $\mu(w)\in \widetilde{F}$;

\item[(\romannumeral2)] $\mu(f)\in 2^W$;

\item[(\romannumeral3)] $\mu(w)=f$ if and only if $w\in\mu(f)$.
\end{description}
\end{definition}

We say that a matching $\mu$ is \textbf{individually rational} if $\mu(w)\succeq_w {\o}$ for all $w\in W$ and $\mu(f)=\mathrm{Ch}_f(\mu(f))$ for all $f\in F$. Workers' individual rationalities mean that each matched worker prefers her current employer to being unmatched.  Firms' individual rationalities mean that no firm wish to unilaterally drop any of its employees. We say that a firm $f$ and a subset of workers $S\subseteq W$ form a \textbf{blocking coalition} that blocks $\mu$ if $f\succeq_w\mu(w)$ for all $w\in S$, and $S\succ_f\mu(f)$. A blocking coalition $(f,S)$ blocks matching $\mu$ if forming the coalition is weakly better for all members of the coalition and strictly better for at least one member. However, since preferences are strict and$\mu(f)\neq S$, firm $f$ must get strictly better.

\begin{definition}\label{stability}
\normalfont
A matching $\mu$ is \textbf{stable} if it is individually rational and there are no blocking coalitions that block $\mu$.\footnote{Since a firm unilaterally dropping some employees can be viewed as the firm and part of its employees forming a blocking coalition, the no-blocking-coalition condition implies firms' individual rationalities: $\mu(f)=\mathrm{Ch}_f(\mu(f))$ for all $f\in F$.}
\end{definition}

The notion of stability is stronger than the core in discrete matching. The set of stable matchings defined by Definition \ref{stability} is called the core defined by weak domination in \cite{RS90} and the set of stable$^*$ matchings in \cite{EO04}. Stable matchings are characterized as fixed points of certain operators; see \cite{A00}, \cite{F03}, \cite{EO04,EO06}, and \cite{HM05}, among others.

A firm-worker hypergraph is a hypergraph $(F\cup W,\mathcal{E})$ in which each vertex $j\in F\cup W$ is either a firm or a worker, and each edge contains one firm and its possible employees:
$\mathcal{E}=\{\{f\}\cup S|f\in F \text{ and } S\in \mathcal{Y}_f\}$. Notice that a set of a firm's possible employees is a satisfactory set for this firm in discrete matching.

\begin{theorem}\label{thm_exist2}
\normalfont
A stable matching exists in a discrete matching market if the firm-worker hypergraph has no nontrivial odd-length cycles.\footnote{See the definition for a nontrivial odd-length cycle in Section \ref{Sec_Mct}.}
\end{theorem}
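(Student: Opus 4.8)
The plan is to reduce the discrete matching problem to an integer feasibility question and then use the balancedness of the firm-worker hypergraph, mirroring the structure of the proof of Theorem \ref{thm_exist1}. The key conceptual difference is that in discrete matching there are no continuous transfers, so I cannot directly invoke linear programming duality over utilities. Instead, following the approach signaled in the related-literature discussion, I would start from an existence result that guarantees a \emph{fractional} or \emph{schedule} stable matching, and then argue that balancedness lets me round it to an integral (full-time) stable matching without losing stability.

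Concretely, first I would invoke the existence theorem of \cite{CKK19} (as packaged through the integer feasibility program of \citetalias{H23a}) to obtain a stable \emph{schedule} matching, i.e.\ a fractional assignment $(x_E)_{E\in\mathcal{E}}$ in which each worker's total time is split among acceptable firms and no firm-set pair can profitably block even fractionally. The variables $x_E$ are indexed by the edges $E=\{f\}\cup S$ of the firm-worker hypergraph, with the satisfactory-set structure $\mathcal{Y}_f$ ensuring that only individually rational assignments carry positive weight. Second, I would write down the polytope of such fractional stable matchings: the defining constraints are the worker-side ``total assignment equals one'' equalities together with nonnegativity, and the left-hand coefficient matrix is exactly the incidence matrix of the firm-worker hypergraph. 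Third, because the hypergraph has no nontrivial odd-length cycle, this incidence matrix is balanced, so by Lemma 2.1 of \cite{FHO74} every vertex of the polytope is integral. Fourth, I would argue that the fractional stable matching can be taken to be (or can be replaced by) a vertex of this polytope, whence it is integral; an integral schedule matching is precisely a full-time matching, and I would verify that the integrality does not create any new blocking coalition, so stability is preserved.

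The main obstacle I anticipate is the fourth step: cleanly tying the integrality of the polytope's vertices to the \emph{stability} of the rounded matching, rather than merely to feasibility. A stable schedule matching need not itself be a vertex of the feasibility polytope, and naive rounding could introduce a blocking coalition that was only ruled out in the fractional sense. The careful part is to set up the integer feasibility program of \citetalias{H23a} so that its \emph{feasible} integral points coincide with stable full-time matchings, and then use balancedness to conclude that feasibility of the fractional relaxation (guaranteed by \cite{CKK19}) forces feasibility of the integer program. In other words, the satisfactory-set formulation $\mathcal{E}=\{\{f\}\cup S\mid f\in F,\ S\in\mathcal{Y}_f\}$ must be exactly what makes ``no fractional blocking'' and ``no integral blocking'' agree once we are at an integral point, and I expect the bulk of the work to be checking that the constraint system correctly encodes the no-blocking-coalition condition of Definition \ref{stability}.

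A secondary point worth isolating is the role of the odd-cycle condition itself. I would emphasize that balancedness is used in precisely the same way as in Theorem \ref{thm_exist1}, namely to guarantee integrality of the relevant polytope via \cite{FHO74}; the two theorems thus share a common combinatorial engine, with the transferable-utility case handled by duality and the discrete case handled by integral rounding of a fractional stable matching. If the reduction to \citetalias{H23a}'s program goes through as expected, the proof should be short, consisting mainly of citing \cite{CKK19} for existence of the fractional object, identifying the coefficient matrix as the balanced incidence matrix, and applying \cite{FHO74} to round.
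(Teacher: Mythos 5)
Your proposal follows essentially the same route as the paper: the paper derives Theorem \ref{thm_exist2} from Proposition \ref{propexist}, which invokes Theorem 2 of \cite{CKK19} for a stable schedule matching, uses the integer feasibility program of \cite{H23a} (whose polytope $\{\mathbf{z}\mid B\mathbf{z}=\mathbf{1},\ \mathbf{z}\geq 0\}$ already encodes the transformation of a stable schedule matching into a stable full-time one, resolving the stability-preservation worry you flag in your fourth step), and applies Lemma 2.1 of \cite{FHO74} to the balanced matrix $B$. The only cosmetic difference is that $B$ is not literally the incidence matrix of the firm-worker hypergraph but a $0$--$1$ matrix whose columns with at least two ones are edges of it, which inherits balancedness all the same.
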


\begin{example}\label{exam_B}
\normalfont
Consider the following firms' preference profile.
\begin{equation}\label{exam_balanced}
\begin{aligned}
&f_1: \{w_1,w_2\}\succ\{w_1\}\succ\emptyset\\
&f_2: \{w_1\}\succ\{w_3\}\succ\emptyset\\
&f_3: \{w_2,w_3\}\succ\emptyset
\end{aligned}
\end{equation}
This profile has the same firm-worker hypergraph as market (\ref{exam_balancedA}). Hence, Theorem \ref{thm_exist2} indicates that a stable matching exists for all possible preferences of workers. The odd-length cycle (\ref{odd_cycle}) in this market seems to be an ``instability cycle'' as the cycle of market (\ref{exam_in1}) if we assume the following workers' preferences.
\begin{equation*}
w_1: f_1\succ f_2 \qquad\qquad w_2: f_3\succ f_1 \qquad\qquad w_3: f_2\succ f_3
\end{equation*}
We can then construct a similar cyclic blocking process as illustrated in market (\ref{exam_in1}). However, this cycle breaks down if we match firm $f_1$ with $\{w_1,w_2\}$ and match firm $f_2$ with $\{w_3\}$ (or match $f_2$ with $\{w_1$\} and match $f_3$ with $\{w_2,w_3\}$).
\end{example}

We have an immediate corollary from Theorem \ref{thm_exist2}: A marriage market always has a stable matching. As illustrated in Section \ref{Sec_cyclic}, a woman-man graph of any marriage market has no odd-length cycles since each cycle of a woman-man graph consists of several man-woman-man paths. Theorem \ref{thm_exist2} follows from a more general statement below.

\begin{proposition}\label{propexist}
\normalfont
A stable matching exists in a discrete matching market if the firm-worker hypergraph has no nontrivial odd-length cycles satisfying the condition: for each $f\in F$ and each pair $S,S'$ such that both $\{f\}\cup S$ and $\{f\}\cup S'$ are edges of the cycle, $S=\mathrm{Ch}_f(S\cup S')$ or $S'=\mathrm{Ch}_f(S\cup S')$.
\end{proposition}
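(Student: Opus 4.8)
The plan is to mirror the proof of Theorem~\ref{thm_exist1}: realize the stable matchings as the integral points of a polytope, exhibit a fractional feasible point, and then force integrality through balancedness (Lemma~2.1 of \cite{FHO74}). The one structural difference is that discrete matching offers no linear-programming duality linking stability to an objective value, so I would obtain the fractional point from the theory of schedule matchings rather than from an optimum. Concretely, I would first invoke the existence theorem of \cite{CKK19} to produce a stable \emph{schedule} matching $\sigma$, in which each worker splits one unit of time among acceptable firms and each firm spreads one unit of time over its satisfactory sets, with no coalition able to profitably renegotiate its time allocation.

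Reading the integer feasibility program of H23a, I would interpret $\sigma$ as a nonnegative weight vector $(\delta_E)_{E\in\mathcal{E}}$ on the satisfactory edges, together with slacks recording unmatched time, that is feasible for the program's relaxation; by H23a the integral feasible points of this program are exactly the stable full-time matchings. Thus the relaxed region $P$ is nonempty, and it suffices to exhibit an integral point of $P$. I would then isolate the face of $P$ obtained by fixing $\delta_E=0$ for every edge outside the support of $\sigma$, so that $\sigma$ itself certifies this face is nonempty.

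The crux is to show that the constraint matrix of this face is balanced. Here I would prove the key claim that $\sigma$ activates only choice-consistent sets: if a firm $f$ places positive weight on two satisfactory sets $S$ and $S'$, then $S=\mathrm{Ch}_f(S\cup S')$ or $S'=\mathrm{Ch}_f(S\cup S')$. Otherwise $T\equiv\mathrm{Ch}_f(S\cup S')$ satisfies $T\succ_f S$ and $T\succ_f S'$, and since every worker in $T\subseteq S\cup S'$ already devotes time to $f$, the coalition $(f,T)$ could reallocate the time $f$ spends on $S$ and $S'$ onto $T$ — strictly better for $f$ and acceptable to the workers of $T$ — contradicting stability of $\sigma$. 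Consequently every nontrivial odd-length cycle built from edges in the support of $\sigma$ automatically meets the condition in the Proposition, so the hypothesis forbids such cycles outright: the sub-hypergraph of active edges has no nontrivial odd-length cycle, is therefore balanced, and its incidence matrix — augmented by unit columns as in the proof of Theorem~\ref{thm_exist1} — is balanced.

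Finally, I would apply Lemma~2.1 of \cite{FHO74} to this face: a balanced constraint matrix makes its defining polytope integral, and since the face is nonempty it has an integral vertex, necessarily a vertex of $P$. Being a $0/1$ point of $P$, it is a full-time assignment of workers to firms that, by H23a's characterization, is stable; Theorem~\ref{thm_exist2} is then the special case in which no nontrivial odd-length cycle exists at all, so a fortiori none meets the choice condition. I expect the main obstacle to be the bookkeeping that ties the Proposition's cycle condition to balancedness of precisely the matrix H23a's program supplies — in particular, pinning down the schedule-stability definition well enough to justify the choice-consistency claim, and confirming that passing to an integral vertex of the face preserves the no-blocking-coalition requirement rather than merely the worker- and firm-degree constraints.
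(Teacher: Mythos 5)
Your proposal follows essentially the same route as the paper: invoke \cite{CKK19} for a stable schedule matching, feed it into H23a's integer feasibility program, observe that the active columns are pairwise choice-consistent so that the hypothesis rules out any nontrivial odd-length cycle among them, and then apply Lemma~2.1 of \cite{FHO74} to extract an integral (hence stable full-time) point. The only divergence is cosmetic: the paper obtains the choice-consistency of the columns of $B$ directly from the structure of H23a's consumption process, whereas you derive it from a blocking argument against the stable schedule matching --- a legitimate alternative, though, as you note yourself, it requires pinning down CKK19's schedule-stability notion, which the paper's citation-based justification sidesteps.
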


This statement follows from H23a's integer feasibility program and the existence of a stable schedule matching showed by \cite{CKK19}. In a schedule matching, each worker schedules her time among different firms, and each firm schedules its time among different groups of workers. H23a showed that a stable schedule matching can always be transformed into a stable (full-time) matching if there is an integral vertex on a nonempty polytope $\{\mathbf{z}\mid B\mathbf{z}=\mathbf{1}, \mathbf{z}\geq0\}$ in which $B$ is a 0-1 matrix, and each column of $B$ with at least two ones is an edge of the firm-worker hypergraph. See Section 3 of H23a for the construction of the polytope. According to Theorem 2 of \cite{CKK19}, a stable schedule matching always exists, then Lemma 2.1 of \cite{FHO74} implies that a stable (full-time) matching exists if matrix $B$ is balanced. Moreover, for each firm $f\in F$ and each pair $S,S'$ such that both $\{f\}\cup S$ and $\{f\}\cup S'$ are columns of $B$, $S=\mathrm{Ch}_f(S\cup S')$ or $S'=\mathrm{Ch}_f(S\cup S')$ holds. This is because $S$ and $S'$ are the sets consumed in a consumption process of H23a. We therefore obtain Proposition \ref{propexist}.

\begin{example}\label{exam_C}
\normalfont
Consider the firms' preference profile
\begin{equation}\label{exam_prop1}
f_1: \{w_1,w_2\}\succ\emptyset \qquad\qquad\qquad f_2: \{w_1,w_2\}\succ\{w_1\}\succ\{w_2\}\succ\emptyset
\end{equation}
Although the firm-worker hypergraph of (\ref{exam_prop1}) has the nontrivial odd-length cycle depicted in Section \ref{Sec_intro}, this preference profile satisfies the condition of Proposition \ref{propexist}: The odd-length cycle contains $\{f_2,w_1\}$ and $\{f_2,w_2\}$, however, neither $\{w_1\}=\mathrm{Ch}_{f_2}(\{w_1,w_2\})$ nor $\{w_2\}=\mathrm{Ch}_{f_2}(\{w_1,w_2\})$ holds.
\end{example}

Proposition \ref{propexist} is clearly a generalization of Theorem \ref{thm_exist2}. It is also a generalization of H23a's total unimodularity condition. We prove the last statement in the Appendix (see Section \ref{Sec_prop1}).

We can also prove Theorem \ref{thm_exist1} and Theorem \ref{thm_exist2} using Scarf's lemma since the problems of matching with transferable utilities and discrete matching are special NTU-games.\footnote{NTU-games refer to coalitional games with nontransferable utilities.} The two problems correspond to similar polytopes in Scarf's lemma,\footnote{See the polytope in Section 6 of \cite{H23b} for discrete matching, which is essentially the same as the polytope defined by (\ref{polytope}) and (\ref{nonnegative}) in our Section \ref{Sec_Mct} for matching with transferable utilities.} and therefore, it is not surprising to see similar existence results in the two problems. However, it is more direct to prove Theorem \ref{thm_exist1} using the LP duality for the Bondareva-Shapley theorem, and we cannot derive Proposition \ref{propexist} merely from the properties of the polytope in Scarf's lemma. In a subsequent work (\citealp{H23b}), based on a reduction of \cite{K10} from the set of stable matchings to the core, we proposed a concavity condition on the whole preference profile of firms and workers. This result is an analogue to the balancedness condition of \cite{S67} and subsumes Theorem \ref{thm_exist2}. In contrast, the conditions in Theorem \ref{thm_exist2} and Proposition \ref{propexist} are on firms' preferences and are easier to verify.

\section{Application\label{Sec_app}}

This section presents an application of our condition based on a structure called technology roadmap, which generalizes the structure of H23a's technology tree. A \textbf{technology roadmap} $TR=(T,W)$ is a directed tree\footnote{A tree is a connected acyclic graph. A directed tree is a directed acyclic graph whose underlying undirected graph is a tree.} $T=(V,E)$ defined on a set of workers $W$. Each vertex from $V$ represents a \textbf{technology}. Each technology $v\in V$ demands a subset of workers $W^v\subseteq W$ to implement. We say that worker $w$ engages in technology $v$ if $w\in W^v$. Each edge is an upgrade from one technology to another. The technology roadmap generalizes H23a's technology tree in two aspects: (i) A technology tree is a rooted tree, whereas a technology roadmap does not necessarily have a root. (ii) An upgrade in a technology tree requires more workers such that the worker set expands along with the edge direction, whereas a technology roadmap does not impose restrictions on worker sets demanded by the technologies.\footnote{We will restrict worker sets demanded by the technologies by requiring workers to be specialists. However, the definition of technology roadmap itself does not impose restrictions on worker sets demanded by the technologies.} In a technology tree of H23a, there is only one evolutionary path toward a technology. However, the technology roadmap allows multiple evolutionary paths toward a technology. We illustrate this with the following example.

\begin{example}\label{exam_app}
\normalfont
Consider a market with three firms and five workers. A technology roadmap is depicted as follows.
\begin{center}
\begin{tikzpicture}[scale=0.95]

  \node (n1) at (1,6.5) {$v_1:\{w_1\}$};
  \node (n2) at (4,7) {$v_2:\{w_2,w_3\}$};
  \node (n3) at (7.5,5)  {$v_4:\{w_2,w_4\}$};
  \node (n4) at (3,4) {$v_3:\{w_1,w_2\}$};
  \node (n5) at (4,1.5)  {$v_5:\{w_1,w_5\}$};
  \node (n6) at (0,0){$v_6:\{w_5\}$};

  \foreach \from/\to in {n1/n4,n2/n4,n4/n3,n4/n5,n6/n5}
    \draw [-latex](\from) -- (\to) [very thick];

\end{tikzpicture}
\end{center}

Each vertex from $\{v_1,\ldots,v_6\}$ represents a technology. A technology can have multiple evolutionary paths: Technology $v_3$ can be upgraded from technology $v_1$ or $v_2$; technology $v_5$ can be upgraded from technology $v_3$ or $v_6$. Each technology requires the set of workers on the right to implement.
\end{example}

H23a assumed each worker to be a specialist over a technology tree. We instead assume that both firms and workers are specialized in a technology roadmap. Specializations of workers and firms are captured by the scopes of technologies they engage in and possess, respectively. In a technology roadmap $TR=(T,W)$ with $T=(V,E)$, a subgraph $P=(V^P,E^P)$ of $T$ is called a \textbf{technology path} if $(V^P,E^P)$ is either a directed path\footnote{A directed path in a directed graph is a path whose edges are all oriented in the same direction.} or a vertex (i.e., $|V^P|=1$ and $E^P=\emptyset$). We say that worker $w$ engages in the subgraph $G^w=(V^w,E^w)$ of $T$ in which $V^w=\{v|w\in W^v\}$ and $E^w=\{vv'|v,v'\in V^w\text{ and }vv'\in E\}$.

\begin{definition}\label{def_specialist}
\normalfont
Worker $w$ is called a \textbf{specialist} in technology roadmap $TR$ if $G^w$ is a technology path.
\end{definition}

We assume each worker is a specialist who engages in a subgraph that is a technology path. For instance, in the above technology roadmap, worker $w_1$ engages in the subgraph $v_1\rightarrow v_3\rightarrow v_5$. Worker $w_2$ engages in the subgraph $v_2\rightarrow v_3\rightarrow v_4$. Other workers are also specialists.

\begin{definition}\label{def_specialized}
\normalfont
In a market $\Phi$ with transferable utilities or a discrete matching market $\Gamma$, firms are \textbf{specialized} over a technology roadmap $
TR=(T,W)$ if there is a collection of technology paths $\{P^f=(V^f,E^f)\}_{f\in F}$ of $T$ such that
\begin{description}
\item[(\romannumeral1)]for each $f\in F$, $S\in \mathcal{A}_f$ implies $S=W^v$ for some $v\in V^f$;

\item[(\romannumeral2)]$V^f\cap V^{f'}=\emptyset$ for any $f,f'\in F$ with $f\neq f'$.
\end{description}
\end{definition}

Condition (i) requires each firm's acceptable sets to be from a technology path of a technology roadmap. Condition (ii) requires different firms' acceptable sets to be from disjoint technology paths. For instance, consider the following discrete matching market.
\begin{equation}\label{pre_app}
\begin{aligned}
&f_1: \{w_2,w_4\}\succ\{w_1\}\succ  \emptyset\\
&f_2: \{w_2,w_3\}\succ \emptyset\\
&f_3: \{w_1,w_5\}\succ \{w_5\}\succ \emptyset
\end{aligned}
\end{equation}
Firm $f_1$'s acceptable sets are from the technology path $v_1\rightarrow v_3\rightarrow v_4$ of the above technology roadmap; firm $f_2$'s acceptable sets are from the technology path $v_2$; and firm $f_3$'s acceptable sets are from the technology path $v_5\rightarrow v_6$. We can regard this assumption as that firm $f$ possesses the technologies of technology path $P^f$ for each $f\in F$. Each firm has an arbitrary preference over the technologies it possesses, which induces the firm's preference over the worker sets demanded by the technologies.\footnote{We allow a technology possessed by a firm to be unacceptable for this firm. For example, if firm $f_1$ possesses technologies $v_1$, $v_3$, and $v_4$ from the technology roadmap in Example \ref{exam_app}, the firm can have the preference showed in (\ref{pre_app}).}

Although condition (ii) implies that acceptable sets of different firms are induced from different technologies, firms can still have common acceptable sets since the same set of workers can implement multiple technologies. The collection of technology paths in Definition \ref{def_specialized} may not be unique. For example, suppose there is only one firm with only one acceptable set $\{w\}$. The firm is specialized over the technology roadmap $v_1:\{w\}\rightarrow v_2:\{w\}$ in which the firm may possess $v_1$ or $v_2$.

\begin{theorem}\label{thm_special}
\normalfont
In a market $\Phi$ with transferable utilities or a discrete matching market $\Gamma$, the firm-worker hypergraph has no nontrivial odd-length cycles if there exists a technology roadmap in which (i) each worker is a specialist, and (ii) the firms are specialized over this technology roadmap.
\end{theorem}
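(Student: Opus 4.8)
The plan is to argue by contradiction: assume the firm-worker hypergraph contains a nontrivial odd-length cycle $(j^1,E^1,\dots,j^k,E^k,j^1)$ with $k$ odd, and derive a contradiction from the roadmap structure. The first step is to attach to each cycle edge a technology. Since the firms are specialized, every edge $E^i$ has the form $\{f_i\}\cup W^{v_i}$ for a unique technology $v_i\in V^{f_i}$; because the sets $V^f$ are pairwise disjoint and the cycle edges are distinct, the technologies $v_1,\dots,v_k$ are pairwise distinct vertices of the tree $T$. I would then record the two basic consequences of the hypotheses. First, the agent $j^{i+1}$ shared by $E^i$ and $E^{i+1}$ engages in (if a worker) or possesses (if a firm) both $v_i$ and $v_{i+1}$; since in either case the set of technologies associated with $j^{i+1}$ (namely $V^w$ for a worker $w$, or $V^f$ for a firm $f$) is the vertex set of a directed path, $v_i$ and $v_{i+1}$ lie on a common directed path, so the unique $T$-path $[v_i,v_{i+1}]$ is monotone (consistently oriented).

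The second consequence is a ``no betweenness'' property, which is where the nontriviality of the cycle enters. Suppose some cycle technology $v_m$ lay strictly inside the segment $[v_i,v_{i+1}]$. Convexity of the directed path associated with $j^{i+1}$ forces $v_m$ into it, and since $v_m$ is itself a cycle technology it is acceptable to its owner; one checks this owner is exactly $j^{i+1}$ (using disjointness of the $V^f$) when $j^{i+1}$ is a firm, and that $j^{i+1}\in W^{v_m}$ when it is a worker. Either way $j^{i+1}\in E^m$, so $E^m$ would contain a third cycle vertex, contradicting the requirement that every edge of a nontrivial cycle meet it in exactly two vertices. Hence no cycle technology lies in the interior of any segment $[v_i,v_{i+1}]$.

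The main work, and the main obstacle, is the parity argument showing that such a cyclic configuration cannot have odd length. I would fix a height function $h$ on $V(T)$ strictly increasing along every directed edge (available since $T$ is acyclic) and view $v_1,\dots,v_k$ as a closed walk $\Omega$ on $T$ obtained by concatenating the monotone segments $[v_i,v_{i+1}]$; each segment is $h$-monotone, so $h$ has a well-defined direction on it. Call $v_i$ a local maximum, local minimum, or a pass-through according to the two $h$-directions at it. The crux is to rule out pass-throughs, and here the orientation is essential: the undirected analogue is false (three ``cherry'' paths through the centre of a star form a nontrivial odd cycle), so the argument must use directions. The device I would use is that on a \emph{tree} every closed walk traverses each edge equally often in each direction; consequently, at any vertex $z$ the total number of traversals along edges oriented \emph{into} $z$ is even. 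Tabulating the two edges used at each visit of $\Omega$ to $z$, a monotone (pass-through) visit uses exactly one such in-edge, while a local extremum uses zero or two; therefore the number of pass-through visits is even at every vertex.

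Finally I would apply this at the cycle technologies themselves. By the no betweenness property, $\Omega$ visits each $v_i$ exactly once, so the number of pass-through visits at $v_i$ is either $0$ or $1$; evenness forces it to be $0$, i.e.\ every $v_i$ is a local extremum of $\Omega$. Since maxima and minima must alternate around a closed walk, the number of cycle technologies is even, contradicting $k$ odd. This contradiction establishes the theorem for both market types simultaneously, since the only features used are that each agent's set of technologies is a directed path (a worker's by Definition \ref{def_specialist}, a firm's by Definition \ref{def_specialized}) and that distinct firms own disjoint technologies.
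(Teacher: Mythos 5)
Your proof is correct, and while it shares the paper's overall strategy---map each cycle edge to a technology vertex of the tree, use nontriviality to show that each segment $[v_i,v_{i+1}]$ is a monotone directed path containing no other cycle technology in its interior, and then derive a parity contradiction---the way you extract the parity contradiction is genuinely different. The paper runs an explicit induction, growing a directed subtree $T^t$ as the union of the first $t$ segments and tracking which of $E^k,E^1,\ldots,E^t$ are leaves that are sources versus sinks of their incident edges; the contradiction comes from $E^{k-1}$ and $E^k$ having the wrong source/sink types to be joined by a directed path. You instead invoke the global lemma that a closed walk on a tree traverses every edge equally often in each direction, combine it with a height function to conclude that the number of monotone pass-through visits at any vertex is even, and deduce that every $v_i$ is a local extremum of the closed walk $\Omega$; alternation of maxima and minima then forces $k$ even. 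Your route avoids the paper's bookkeeping about which leaves of $T^t$ are initial versus terminal vertices and packages all the ``edge contains three cycle vertices'' contradictions into a single no-betweenness lemma, at the cost of introducing the height function and the edge-traversal lemma. Two cosmetic points: the technology $v_i$ representing an edge need not be \emph{unique} (distinct technologies can demand the same worker set), only chooseable so that distinct edges receive distinct technologies---which your disjointness argument does give for edges of different firms, and which for two edges of the same firm follows from $W^{v}\neq W^{v'}$; and the final alternation step deserves the one-line justification that the departure direction at $v_i$ equals the arrival direction at $v_{i+1}$ because the intervening segment is monotone.
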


The proof of this theorem is relegated into the appendix (see Section \ref{Append_thm}). To see the roles of the specializations of workers and firms, consider the following two roadmaps.
\bigskip
\begin{center}
\begin{tikzpicture}

  \node (n1) at (0,1.8) {$v_1:\{w_1\}$};
  \node (n2) at (3.5,1.8) {$v_2:\{w_2\}$};
  \node (n3) at (7,1.8)  {$v_3:\{w_1,w_2\}$};
  \node (n4) at (0,0) {$v_1:\{w_1\}$};
  \node (n5) at (3.5,0)  {$v_2:\{w_1,w_2\}$};
  \node (n6) at (7,0){$v_3:\{w_2\}$};

  \foreach \from/\to in {n1/n2,n2/n3,n4/n5,n5/n6}
    \draw [-latex] (\from) -- (\to) [very thick];

\end{tikzpicture}
\end{center}

In the first roadmap, worker $w_1$ engages in the subgraph that consists of two isolated vertices $v_1$ and $v_3$. This roadmap can induce the firm-worker hypergraph in Section \ref{Sec_intro} when firm $f_1$ has a single acceptable set from the technology path $v_3$ and firm $f_2$'s acceptable sets are from the technology path $v_1\rightarrow v_2$. The workers are specialists in the second roadmap, but firms are not specialized in the roadmap if firm $f_1$ has a single acceptable set $\{w_1,w_2\}$, and firm $f_2$ has two acceptable sets $\{w_1\}$ and $\{w_2\}$.\footnote{This case may be induced from the technology roadmap if (i) firm $f_1$ possesses the technology $v_2$, and firm $f_2$ possesses the technology $v_1$ and $v_3$, or (ii) firm $f_1$ possesses the technology $v_2$, and firm $f_2$ possesses all technologies from the technology path $v_1\rightarrow v_2\rightarrow v_3$. The firms are not specialized over this roadmap in either (i) or (ii).} This case also gives rise to the cycle of the firm-worker hypergraph in Section \ref{Sec_intro}.

\section{Appendix}

\subsection{Proof of Theorem \ref{thm_special}}\label{Append_thm}
Suppose there exists a technology roadmap $TR=(T,W)$ in which each worker is a specialist and the firms are specialized over this technology roadmap, and there exists the following cycle of length $k\geq3$ in the firm-worker hypergraph:
\begin{equation}\label{app_cycle}
(j(1), E^1, j(2), E^2, ..., j(k), E^k, j(1))
\end{equation}
where $k$ is odd, and no edge of the cycle contains more than two vertices of the cycle.

For each $i\in\{1,2,\ldots,k\}$, let $P^{j(i)}$ be the directed path of $T$ for $j(i)$ defined in Definition \ref{def_specialist} if $j(i)$ is a worker; let all $P^{j(i)}$ with $j(i)\in F$ be from one collection of technology paths defined in Definition \ref{def_specialized}. Based on this collection of technology paths, we can use different technologies of $T$ to represent different edges of the firm-worker hypergraph: Since no two firms possess a common technology, edge $E^i$ that involves firm $f$ corresponds to a vertex on firm $f$'s technology path.\footnote{Recall that the firm-worker hypergraph in discrete matching is defined on firms' satisfactory sets. Since each firm's satisfactory sets are also acceptable to the firm, each edge of the firm-worker hypergraph corresponds to a vertex of the roadmap.} Thus, we use the term vertex $E^i$ to mean the vertex corresponding to $E^i$ on $T$ for each $i\in\{1,\ldots,k\}$. Since $j(1)$ belongs to both $E^k$ and $E^1$, vertex $E^k$ and vertex $E^1$ both lie on $P^{j(1)}$. Without loss of generality, we assume that the directed path between vertex $E^k$ and vertex $E^1$ points from vertex $E^k$ to vertex $E^1$.

Since $T$ is a tree in which any two vertice are connected by exactly one path, the other half of cycle (\ref{app_cycle}), $(E^1, j(2), E^2, ..., j(k), E^k)$, must go through vertex $E^1$ to reach vertex $E^k$. Thus, there must be a segment of this half of cycle going through vertex $E^1$ in $P^{j(1)}$ and in the opposite direction of $P^{j(1)}$. Let $(E^t,j(t+1),E^{t+1})$ with $t\in\{1,\ldots,k-1\}$ be such a segment. Since vertex $E^1$ is on this segment, and this segment is in $P^{j(t+1)}$, we know that $E^1$ contains $j(t+1)$. If $t\neq 1$, then $E^1$ contains $j(1)$, $j(2)$, and $j(t+1)$. This contradicts that (\ref{app_cycle}) is a nontrivial odd-length cycle. Hence, $t=1$, and we have $t+1=2\neq k$ since $k\geq3$. If a directed path starts from one vertex and ends at another vertex, we call the former vertex the initial vertex of the directed path, and the latter vertex the terminal vertex of the directed path. We know that vertex $E^1$ is the terminal vertex of the $E^1E^2$ segment of $P^{j(2)}$.\footnote{This is because the $E^1E^2$ segment of $P^{j(2)}$ goes through vertex $E^1$ in the opposite direction of $P^{j(1)}$.}

If vertex $E^2$ lies on $P^{j(1)}$, then $E^2$ contains $j(1)$, $j(2)$, and $j(3)$. A contradiction. If vertex $E^k$ lies on $P^{j(2)}$, then $E^k$ contains $j(k)$, $j(1)$, and $j(2)$. A contradiction. Hence, the $E^1E^2$ segment of $P^{j(2)}$ is a directed path from vertex $E^2$ to vertex $E^1$ and joining the $E^kE^1$ segment of $P^{j(1)}$ at some vertex other than vertex $E^k$ or $E^1$. We thus know that vertex $E^2$ and vertex $E^k$ are connected by a path that is not a directed path. Since vertex $E^{k-1}$ and vertex $E^k$ both lie on $P^{j(k)}$ and are thus connected by a directed path, we know $k-1\neq2$, and thus $k\geq5$. The structure of the technology roadmap is illustrated below.\footnote{This figure shows one possible structure for the roadmap under the assumption that there is cycle (\ref{app_cycle}) in the firm-worker hypergraph. There are also other possible structures for the roadmap.}
\begin{center}
\tikzstyle{c}=[circle,fill=cyan!20]
\begin{tikzpicture}[scale=0.6]

  \node (Ek) at (-4.5,0) [c] {$E^k$};
  \node (E1) at (6.5,0) [c] {$E^1$};
  \node (E2) at (-1,-4) [c] {$E^2$};
  \node (E3) at (6,4) [c] {$E^3$};
  \node (E4) at (-1.2,3.6) [c] {$E^4$};
  \node (n1) at (-2,0) [c] {};
  \node (n2) at (0,0) [c] {};
  \node (n3) at (2,0) [c] {};
  \node (n4) at (4,0) [c] {};
  \node (n5) at (-0.5,1.5) [c] {};
  \node (n6) at (0.5,-2) [c] {};
  \node (n7) at (5,2) [c] {};

  \foreach \from/\to in {Ek/n1,n1/n2,n2/n3,n3/n4,n4/E1}
    \draw [-latex] (\from) -- (\to) [very thick];

  \foreach \from/\to in {E4/n5,n5/n2,n4/n7,n7/E3,E2/n6,n6/n3}
    \draw [-latex] (\from) -- (\to) [very thick];

\end{tikzpicture}
\end{center}
Let $T^t$ be the union of the segment $E^kE^1$ of $P^{j(1)}$ and all the segments $E^{x-1}E^x$ of $P^{j(x)}$ for all $x\in \{2,\ldots,t\}$. A leaf of a tree is a vertex incident to only one edge. We assume inductively that, for some $s\in\{2,\ldots,k-3\}$ and $s$ is even, $T^s$ is a directed tree that has $E^k,E^1,E^2,\ldots,E^s$ as leaves where $E^x$ with odd $x\in\{1,\ldots,s-1\}$ are the terminal vertices of their incident edges, and $E^k$ and $E^x$ with even $x\in\{2,\ldots,s\}$ are the initial vertices of their incident edges. Notice that the inductive assumption holds for $s=2$.

Since vertex $E^k$ and vertex $E^s$ are connected by a path in $T^s$ where $E^s$ is a leaf and the initial vertex of its incident edge, the other half of the cycle (\ref{app_cycle}), $(E^s, j(s+1), E^{s+1}, ..., j(k), E^k)$, must go through vertex $E^s$ in $P^{j(s)}$ and in the same direction of $P^{j(s)}$. Thus, we know that there is $(E^x,j(x+1),E^{x+1})$ with $x\in\{s,\ldots,k-1\}$ such that vertex $E^s$ lies on $P^{j(x+1)}$. If $x\neq s$, then $E^s$ contains $j(s)$, $j(s+1)$, and $j(x+1)$. A contradiction. Hence, $x=s$ and we know that vertex $E^s$ is the initial vertex of the $E^sE^{s+1}$ segment of $P^{j(s+1)}$.

If vertex $E^{s+1}$ lies on $P^{j(x)}$ with $x\in\{1,\ldots,s\}$, then $E^{s+1}$ contains $j(s+1)$, $j(s+2)$, and $j(x)$. A contradiction. If vertex $E^x$ with $x\in\{1,\ldots,s-1\}\cup\{k\}$ lies on $P^{j(s+1)}$, then $E^{x}$ contains $j(x)$, $j(x+1)$ (or $j(1)$ if $x=k$), and $j(s+1)$. A contradiction. Hence, the $E^sE^{s+1}$ segment of $P^{j(s+1)}$ leaves $T^s$ at some vertex other than $E^1,E^2,\ldots,E^s$ or $E^k$. Thus, $T^{s+1}$ is a directed tree that has $E^k,E^1,E^2,\ldots,E^{s+1}$ as leaves where $E^x$ with odd $x\in\{1,\ldots,s+1\}$ are the terminal vertices their incident edges, and $E^k$ and $E^x$ with even $x\in\{2,\ldots,s\}$ are the initial vertices of their incident edges.

Since vertex $E^k$ and vertex $E^{s+1}$ are connected by a path in $T^{s+1}$ where $E^{s+1}$ is a leaf and the terminal vertex of its incident edge, the other half of the cycle (\ref{app_cycle}), $(E^{s+1}, j(s+2), E^{s+2}, ..., j(k), E^k)$, must go through vertex $E^{s+1}$ in $P^{j(s+1)}$ and in the opposite direction of $P^{j(s+1)}$. Thus, we know that there is $(E^y,j(y+1),E^{y+1})$ with $y\in\{s+1,\ldots,k-1\}$ such that vertex $E^{s+1}$ lies on $P^{j(y+1)}$. If $y\neq s+1$, then $E^{s+1}$ contains $j(s+1)$, $j(s+2)$, and $j(y+1)$. A contradiction. Hence, $y=s+1$ and we know that vertex $E^{s+1}$ is the terminal vertex of the $E^{s+1}E^{s+2}$ segment of $P^{j(s+1)}$.

If vertex $E^{s+2}$ lies on $P^{j(x)}$ with $x\in\{1,\ldots,s+1\}$, then $E^{s+2}$ contains $j(s+2)$, $j(s+3)$, and $j(x)$. A contradiction. If vertex $E^x$ with $x\in\{k\}\cup\{1,\ldots,s\}$ lies on $P^{j(s+2)}$, then $E^{x}$ contains $j(x)$, $j(x+1)$ (or $j(1)$ if $x=k$), and $j(s+2)$. A contradiction. Hence, the $E^{s+1}E^{s+2}$ segment of $P^{j(s+2)}$ joins $T^{s+1}$ at some vertex other than $E^1,E^2,\ldots,E^{s+1}$ or $E^k$. Thus, $T^{s+2}$ is a directed tree that has $E^k,E^1,E^2,\ldots,E^{s+2}$ as leaves where $E^x$ with odd $x\in\{1,\ldots,s+1\}$ are the terminal vertices of their incident edges, and $E^k$ and $E^x$ with even $x\in\{2,\ldots,s+2\}$ are the initial vertices of their incident edges.

Recall that the inductive assumption holds for $s=2$. According to the above inductive argument, for any even $t\leq k-1$, $T^t$ is a directed tree that has $E^k,E^1,E^2,\ldots,E^t$ as leaves where $E^x$ with odd $x\in\{1,\ldots,t-1\}$ are the terminal vertices of their incident edges, and $E^k$ and $E^x$ with even $x\in\{2,\ldots,t\}$ are the initial vertices of their incident edges. Then, $E^t$ is a leaf of $T^t$ and the initial vertex of its incident edge, and $E^k$ is also a leaf of $T^t$ and the initial vertex of its incident edge; thus, $E^t$ and $E^k$ are connected by a path that is not a directed path. Since vertex $E^{k-1}$ and vertex $E^k$ are connected by a directed path in $P^{j(k)}$, we know that $k-1$ is not even, and $k$ is not odd. A contradiction. Therefore, the firm-worker hypergraph has no nontrivial odd-length cycles.

\subsection{On Proposition \ref{propexist} and totally unimodular demand types\label{Sec_prop1}}

A matrix is totally unimodular if every square submatrix has determinant 0 or $\pm1$. In particular, each entry in a totally unimodular matrix is 0 or $\pm1$. A set of vectors is called totally unimodular if the matrix that has these vectors as columns is totally unimodular. H23a adopted the notion of demand type from \cite{BK19} to represent how a firm's demand changes as its available set expands.

\begin{definition}\label{df_dt}
\normalfont
For each firm $f\in F$, let $\mathscr{D}_f=\{\mathbf{d}\in \{0,1\}^W\mid  \mathbf{d}\neq \mathbf{0}$ and $\mathbf{d}=\chi_{\mathrm{Ch}_f(S)}-\chi_{\mathrm{Ch}_f(S')}$ for some $S,S'$ with $S'\subset S\subseteq W\}$ be firm $f$'s demand type. The demand type for the firms' preference profile is $\mathscr{D}=\cup_{f\in F}\mathscr{D}_f$.
\end{definition}

H23a showed that a stable matching exists when the firms' demand type $\mathscr{D}$ is totally unimodular. For example, consider the firms' preference profile in Example \ref{exam_C}. Firm $f_1$ has a demand type of $\{(1,1)\}$ since $f_1$ will hire both workers when its available set expands from $\emptyset$ to $\{w_1,w_2\}$. Firm $f_2$ has a demand type of $\{(1,1),(1,0),(0,1)\}$. Hence, the firms' demand type $\mathscr{D}$ is $\{(1,1),(1,0),(0,1)\}$, which is totally unimodular (because any matrix formed by two of these vectors has determinant 0 or $\pm1$).

\begin{proposition}\label{TU}
\normalfont
The firm-worker hypergraph satisfies the condition of Proposition \ref{propexist} if the firms' demand type $\mathscr{D}$ is totally unimodular.
\end{proposition}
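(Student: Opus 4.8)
The plan is to prove the contrapositive: if the firm-worker hypergraph contains a nontrivial odd-length cycle satisfying the $\mathrm{Ch}$-condition of Proposition \ref{propexist}, then the demand type $\mathscr{D}$ is not totally unimodular. Write the cycle as $(j^1,E^1,\ldots,j^k,E^k,j^1)$ with $k$ odd, each edge $E^i=\{g^i\}\cup T^i$ (with $g^i\in F$ its unique firm and $T^i$ its satisfactory set), and the two cycle-vertices of $E^i$ equal to $j^i,j^{i+1}$ (indices mod $k$). I would exhibit a square submatrix of the demand-type matrix whose determinant is $\pm 2$. The relevant entries of $\mathscr{D}$ are the signed differences $\chi_{\mathrm{Ch}_f(S)}-\chi_{\mathrm{Ch}_f(S')}\in\{-1,0,1\}^W$, and the $-1$ entries will be essential.

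First I would classify the edges. Since each edge contains exactly one firm, two firm-vertices cannot be adjacent on the cycle; and since the cycle is nontrivial (each edge meets it in exactly two vertices), a firm $f=j^i$ that is itself a cycle-vertex must be the firm of precisely its two incident edges $E^{i-1},E^i$, whose worker neighbours are $j^{i-1}\in T^{i-1}$ and $j^{i+1}\in T^{i}$. Call such firms \textbf{Type I}. Every remaining edge has both cycle-vertices in $W$; call it a \textbf{Type II} edge. Writing $a$ for the number of firm-vertices and $b$ for the number of worker-vertices, one gets $a+b=k$, exactly $2a$ Type I edges and $k-2a$ Type II edges, and each worker-vertex lies on exactly two cycle-edges.

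The rows of the submatrix $M$ are indexed by the $b$ worker-vertices $W_{\mathrm{cyc}}$, and I choose $b$ columns as follows. For each Type II edge $E^i$ take $\chi_{T^i}=\chi_{\mathrm{Ch}_{g^i}(T^i)}-\chi_{\mathrm{Ch}_{g^i}(\emptyset)}\in\mathscr{D}$; by nontriviality no third cycle-vertex lies in $T^i$, so its restriction to $W_{\mathrm{cyc}}$ is $\chi_{\{j^i,j^{i+1}\}}$, i.e. two $+1$'s. For each Type I firm $f=j^i$ with adjacent sets $S=T^{i-1}$, $S'=T^{i}$, the $\mathrm{Ch}$-condition gives $\mathrm{Ch}_f(S\cup S')\in\{S,S'\}$, so the corresponding difference (either $\chi_S-\chi_{S'}$ or $\chi_{S'}-\chi_S$) has the form $\chi_{\mathrm{Ch}_f(S\cup S')}-\chi_{\mathrm{Ch}_f(R)}$ for the appropriate $R\in\{S,S'\}$ with $R\subsetneq S\cup S'$, hence lies in $\mathscr{D}_f$; its restriction to $W_{\mathrm{cyc}}$ is supported on $\{j^{i-1},j^{i+1}\}$ with one $+1$ and one $-1$. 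This gives exactly $b$ columns. They are pairwise distinct (Type II restrictions are nonnegative with distinct two-element supports, while each Type I restriction carries a genuine $-1$), so they are genuinely distinct columns of the demand-type matrix and $M$ is a bona fide $b\times b$ submatrix. Then $M$ has exactly two nonzero entries, each $\pm1$, in every row and every column; contracting the firm-vertices shows its support graph on $W_{\mathrm{cyc}}$ is a single cycle of length $b$ (the worker-vertices in their cyclic order along the original cycle). For such a cyclic pattern only two transversals contribute, giving $\det M=\prod_{t}x_t+(-1)^{b-1}\prod_{t}y_t$, where $(x_t,y_t)$ are the two nonzero entries of column $t$. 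Since $x_ty_t=+1$ for each Type II column and $x_ty_t=-1$ for each of the $a$ Type I columns, $(\prod_t x_t)(\prod_t y_t)=(-1)^{a}$, so the two monomials share a sign precisely when $(-1)^{a+b-1}=1$, i.e. when $a+b=k$ is odd, which is our hypothesis. Hence $\det M=\pm 2$, contradicting total unimodularity.

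I expect the main obstacle to be the bookkeeping in the middle steps rather than the final parity count. Verifying that the chosen vectors genuinely lie in $\mathscr{D}$ and that their restrictions to $W_{\mathrm{cyc}}$ have exactly the claimed two-nonzero-entry pattern is where nontriviality and the $\mathrm{Ch}$-condition do the real work, and checking column distinctness is what guarantees $M$ is a submatrix with distinct columns (otherwise a repeated column would force $\det M=0$ and destroy the argument). Once the support graph is confirmed to be a single $b$-cycle, the oddness of $k$ is exactly the condition that makes the two monomials in $\det M$ reinforce instead of cancel, so the determinant $\pm 2$ follows immediately.
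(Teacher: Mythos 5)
Your argument is correct and follows essentially the same route as the paper's proof: both pass to the contrapositive and extract from the cycle a $b\times b$ submatrix of the demand-type matrix with determinant $\pm 2$, where $b$ is the number of worker cycle-vertices. The paper arrives at an equivalent matrix (related to yours by elementary column operations) by starting from the $k\times k$ incidence matrix of the cycle, subtracting a base column $m^1_f$ within each firm's block of columns, and cofactor-expanding away the firm rows, whereas you assemble the $b\times b$ matrix directly and verify $\det M=\pm2$ via the two-transversal expansion and a parity count; the underlying idea is identical.
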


\begin{proof}
We provide the proof first, and then an example to illustrate the proof. Suppose the firm-worker hypergraph contains a nontrivial odd-length cycle satisfying the condition described in Proposition \ref{propexist}. Let $M$ be the incidence matrix of the cycle. Since the cycle is a nontrivial odd-length cycle, we know that $M$ is a matrix of odd order with exactly two 1s in each row and column. Thus, we have $|M|=2$ or $-2$. For each firm $f\in F$, let $M_f$ be the submatrix of $M$ that consists of columns from $M$ where each column of $M_f$ corresponds to an edge of the firm-worker hypergraph that contains firm $f$. Let $m^1_f,m^2_f,\cdots$ denote the columns of $M_f$, and let $S^1_f, S^2_f,\cdots$ denote $f$'s satisfactory sets corresponding to $m^1_f,m^2_f,\cdots$, respectively. According to the property of the cycle described in Proposition \ref{propexist}, we assume $S^j_f=\mathrm{Ch}_f(S_f^j\cup S_f^k)$ if $j>k$.
For each firm $f$ involved in the cycle, let $M'_f$ be the matrix that consists of columns $m^1_f$, $m^2_f-m^1_f$, $m^3_f-m^1_f$,$\cdots$, and $M'$ the matrix that consists of all columns of all $M'_f$ for $f\in F$.

We have $|M'|=|M|$ since subtracting one column from another column leaves the determinant unchanged. For each  firm $f$ involved in the cycle, if column $m^1_f$ contains a component $1$ for firm $f$, we remove both the column and the row containing this component from $M'$. Let $M''$ denote the remaining matrix $M''$. Then, each column of $M''$ is from the firms' demand type $\mathscr{D}$. We have  $|M''|=2$ or $-2$,\footnote{We have $|M''|=|M'|$ or $|M''|=-|M'|$ since each row we remove contains exactly one $1$ and these $1$s lie on a diagonal of $M'$ after permutating the rows and columns of $M'$ properly.} and thus, the firms' demand type $\mathscr{D}$ is not totally unimodular.
\end{proof}

For example, let $M$ be the incidence matrix for the cycle in market (\ref{exam_in1}).
\begin{center}
\begin{tabular}
[c]{c|cccc}
& $\{f_1,w_1\}$ & $\{f_1,w_2\}$ & $\{w_1,w_2\}$\\\hline
$f_1$ & $1$ & $1$ & $0$\\
$w_1$ & $1$ & $0$ & $1$\\
$w_2$ & $0$ & $1$ & $1$%
\end{tabular}
\end{center}
We then have
\begin{equation*}
M_{f_1}=
\begin{pmatrix}
1      & 1     \\
0      & 1     \\
1      & 0
\end{pmatrix}\text{ in which }
m^1_{f_1}=
\begin{pmatrix}
1      \\
0      \\
1
\end{pmatrix}\text{ and }
m^2_{f_1}=
\begin{pmatrix}
1      \\
1      \\
0
\end{pmatrix},\quad
M_{f_2}=
\begin{pmatrix}
0      \\
1      \\
1
\end{pmatrix},
\end{equation*}
and thus,
\begin{equation*}
M'_{f_1}=
\begin{pmatrix}
1      & 0     \\
0      & 1     \\
1      & -1
\end{pmatrix},\quad
M'_{f_2}=
\begin{pmatrix}
0      \\
1      \\
1
\end{pmatrix},\quad
\begin{tabular}
[c]{c|cccc}
& $\{f_1,w_1\}$ & $\{f_1,w_2\}$ & $\{w_1,w_2\}$\\\hline
$f_1$ & $1$ & $0$ & $0$\\
$w_1$ & $0$ & $1$ & $1$\\
$w_2$ & $1$ & $-1$ & $1$%
\end{tabular}=M'.
\end{equation*}
We remove the first row and the first column from $M'$, both containing the component for $f_1$. We obtain the remaining matrix
\begin{equation*}
M''=
\begin{pmatrix}
1     & 1     \\
-1      & 1
\end{pmatrix}.
\end{equation*}
The firms' demand type $\mathscr{D}$ is $\{(1,1),(1,0),(0,1),(1,-1)\}$ in market (\ref{exam_in1}). The columns of $M''$ are from $\mathscr{D}$.

\subsection{On the reverse implication of the existence theorems}\label{necessity}

In this section, we show that our condition is not necessary in the sense stated in Section \ref{Sec_cyclic}. We provide below a firm-worker hypergraph with a nontrivial odd-length cycle and show that, in both matching with transferable utilities and discrete matching, stable matchings exist in all markets with this firm-worker hypergraph.

\normalfont
\begin{center}
\begin{tikzpicture}[scale=0.6]
    \node (v1) at (-1,0) {};
    \node (v2) at (2,4) {};
    \node (v3) at (5,0) {};
    \node (v4) at (2,0.8) {};
    \node (v5) at (2,-1) {};
    \node (v6) at (2,-0.3) {};
    \node (v7) at (2,1.6) {};
    \node (v8) at (0.3,2.7) {};
    \node (v9) at (3.7,2.7) {};

    \begin{scope}[fill opacity=0.7]
    \filldraw[fill=red!70] ($(v2)+(0,0.8)$)
        to[out=0,in=60] ($(v4)$)
        to[out=240,in=0] ($(v1) + (0,-0.8)$)
        to[out=180,in=180] ($(v2) + (0,0.8)$);
    \filldraw[fill=blue!70] ($(v2)+(0,1)$)
        to[out=180,in=120] ($(v4)$)
        to[out=300,in=180] ($(v3) + (0,-0.8)$)
        to[out=0,in=0] ($(v2) + (0,1)$);
    \filldraw[fill=yellow!80] ($(v1)+(-0.6,0)$)
        to[out=90,in=180] ($(v4)+ (0,1.5)$)
        to[out=0,in=90] ($(v3) + (0.6,0)$)
        to[out=270,in=0] ($(v5)$)
        to[out=180,in=270] ($(v1) + (-0.6,0)$);
    \end{scope}

    \foreach \v in {1,2,3} {
        \fill (v\v) circle (0.1);
    }

    \fill (v1) circle (0.1) node [right] {$w_1$};
    \fill (v2) circle (0.1) node [below] {$w_3$};
    \fill (v3) circle (0.1) node [left] {$w_2$};
    \fill (v7) circle (0.1) node [below right] {$w_4$};
    \fill (v6) circle (0.1) node [below] {$f_1$};
    \fill (v8) circle (0.1) node [left] {$f_3$};
    \fill (v9) circle (0.1) node [right] {$f_2$};
\end{tikzpicture}
\end{center}
\medskip
This hypergraph has one nontrivial odd-length cycle:
\begin{equation*} (w_1,\{w_1,f_1,w_4,w_2\},w_2,\{w_2,f_2,w_4,w_3\},w_3,\{w_3,f_3,w_4,w_1\},w_1)
\end{equation*}
Consider a market with this firm-worker hypergraph in the framework of matching with transferable utilities. Suppose there is at least one edge gaining a positive aggregate value, otherwise all agents staying unmatched is a stable matching. Then, we can always obtain a stable matching by matching workers to the firm within the edge that has the largest aggregate value and letting all agents obtain zero utility except for worker $w_4$.

A discrete matching market with this firm-worker hypergraph always has a stable matching: If all firms are acceptable to all workers, matching $w_4$ to her favorite firm (together with two other workers) is a stable matching. It is also straightforward to see that a stable matching exists if some firms are not acceptable to some workers.

\bigskip


\begin{thebibliography}{99999999999999999999999999999999999999999}

\bibitem[Adachi(2000)]{A00}{\small Adachi, H., 2000. On a characterization of stable matchings. \textit{Economic Letters}, 68, 43-49.}

\bibitem[Azevedo et al.(2013)]{AWW13}{\small Azevedo, E.M., Weyl, E.G., White, A., 2013. Walrasian equilibrium in large, quasilinear markets. \emph{Theoretical Economics}, 8(2), 281-290. }

\bibitem[Baldwin and Klemperer(2019)]{BK19}{\small Baldwin, E., Klemperer, P., 2019. Understanding Preferences: ``Demand Types'', and the Existence of Equilibrium with Indivisibilities. \emph{Econometrica}, 87, 867-932.}
    
\bibitem[Bando and Hirai(2021)]{BH21}{\small Bando, K., Hirai, T., 2021. Stability and venture structures in multilateral matching. \emph{Journal of Economic Theory}, 196, 105292.}

\bibitem[Berge(1970)]{B70}{\small Berge, C., 1970. Sur certains hypergraphes g\'{e}n\'{e}ralisant les graphes bipartites, in: \emph{Combinatorial Theory and its Appliations I} (Proceedings Colloquium on Combinatorial Theory and its Applications, Balatonf\"{u}red, Hungary, 1969; P. Erd\"{o}s, A. R\'{e}nyi, and V. S\'{o}s, Eds), North-Holland, Amsterdam, 1970, 119-133.}

\bibitem[Bikhchandani and Mamer(1997)]{BM97}{\small Bikhchandani, S., Mamer, J.W., 1997. Competitive equilibrium in an exchange economy with indivisibilities. \emph{Journal of Economic Theory}, 74(2), 385-413. }

\bibitem[Bondareva(1963)]{B63}{\small Bondareva, O. N., 1963. Some applications of linear programming methods to the theory of cooperative games. \textit{Problemi Kibernitiki}, 10, 119-139.}

\bibitem[Che et al.(2019)]{CKK19}{\small Che, Y-K., Kim, J., Kojima, F., 2019. Stable matching in large economies. \textit{Econometrica}, 87(1), 65-110.}

\bibitem[Conforti et al.(1999)]{CCR99}{\small Conforti, M., Cornu\'{e}jols, G., Rao, M.R., 1999. Decomposition of balanced matrices. \textit{Journal of Combinatorial Theory (B)}, 77, 292-406.}

\bibitem[Danilov et al.(2001)]{DKM01}{\small Danilov, V., Koshevoy, G., Murota, K., 2001. Discrete convexity and equilibria in economics with indivisible goods and money. \emph{Mathematical Social Sciences}, 41, 251-273.}

\bibitem[Echenique and Oviedo(2004)]{EO04}{\small Echenique, F., Oviedo, J., 2004. Core many-to-one matchings by fixed point methods. \textit{Journal of Economic Theory}, 115, 358-376.}

\bibitem[Echenique and Oviedo(2006)]{EO06}{\small Echenique, F., Oviedo, J., 2006. A theory of stability in many-to-many matching. \textit{Theoretical Economics}, 1, 233-273.}

\bibitem[Fleiner(2003)]{F03}{\small Fleiner, T., 2003. A fixed-point approach to stable matchings and some applications. \textit{Mathematics of Operations Research}, 28, 103-126.}

\bibitem[Fulkerson et al.(1974)]{FHO74}{\small Fulkerson, D.R., Hoffman, A.J., Oppenheim, R., 1974. On balanced matrices. \textit{Mathematical Programming Study}, 1, 120-132.}

\bibitem[Gale and Shapley(1962)]{GS62}{\small Gale, D., Shapley, L.S., 1962. College admissions and the stability of marriage. \textit{American Mathematical Monthly}, 69, 9-15.}

\bibitem[Gul and Stacchetti(1999)]{GS99}{\small Gul, F., Stacchetti, E., 1999. Walrasian equilibrium with gross substitutes. \textit{Journal of Economic Theory}, 87, 95-124.}

\bibitem[Hatfield and Kojima(2008)]{HK08}{\small Hatfield, J.W., Kojima, F., 2008. Matching with contracts: Comment. \textit{American Economic Review}, 98(3), 1189-1194.}
    
\bibitem[Hatfield and Kojima(2010)]{HK10}{\small Hatfield, J.W., Kojima, F., 2010. Substitutes and stability for matching with contracts. \textit{Journal of Economic Theory}, 145(5), 1704-1723.}
    
\bibitem[Hatfield et al.(2013)]{HKNOW13}{\small Hatfield, J. W., Kominers, S. D.,  Nichifor, A., Ostrovsky, M., Westkamp, A., 2013. Stability and competitive equilibrium in trading networks. \textit{Journal of Political Economy}, 121(5), 966-1005.}
    
\bibitem[Hatfield et al.(2021)]{HKNOW21}{\small Hatfield, J. W., Kominers, S. D.,  Nichifor, A., Ostrovsky, M., Westkamp, A., 2021. Chain stability in trading networks. \textit{Theoretical Economics}, 16, 197-234.}
    
\bibitem[Hatfield and Milgrom(2005)]{HM05}{\small Hatfield, J.W., Milgrom, P.R., 2005. Matching with contracts. \textit{American Economic Review}, 95, 913-935.}

\bibitem[Huang(2023a)]{H23a}{\small Huang, C., 2023a. Stable matching: An integer programming approach. \emph{Theoretical Economics}, 18, 37-63.}

\bibitem[Huang(2023b)]{H23b}{\small Huang, C., 2023b. Concave many-to-one matching. \emph{Working paper}, arXiv:2309.04181.}

\bibitem[Kelso and Crawford(1982)]{KC82}{\small Kelso, A. S., Crawford, V.P., 1982. Job matching, coalition formation and gross substitutes. \textit{Econometrica}, 50, 1483-1504. }
    
\bibitem[Kominers(2010)]{K10}{\small Kominers, S.D., 2010. Matching with Preferences over Colleagues Solves Classical Matching. \textit{Games and Economic Behavior}, 68(2), 773-780.}

\bibitem[Koopmans and Beckmann(1957)]{KB57}{\small Koopmans, T.C., Beckmann, M., 1957. Assignment problems and the location of economic activities. \textit{Econometrica}, 25(1), 53-76.}

\bibitem[Ma(1998)]{M98}{\small Ma, J., 1998. Competitive equilibrium with indivisibilities. \textit{Journal of Economic Theory}, 82, 458-468.}

\bibitem[Nguyen and Vohra(2018)]{NV18}{\small Nguyen, T., Vohra, R., 2018. Near-Feasible Stable Matchings with Couples. \textit{American Economic Review}, 108(11), 3154-3169.}

\bibitem[Nguyen and Vohra(2019)]{NV19}{\small Nguyen, T., Vohra, R., 2019. Stable matching with proportionality constraints. \textit{Operations Research}, 67(6), 1503-1519.}

\bibitem[Ostrovsky(2008)]{O08}{\small Ostrovsky, M., 2008. Stability in supply chain networks. \textit{American Economic Review}, 98, 897-923. }

\bibitem[Pycia and Yenmez(2023)]{PY23}{\small Pycia, M., Yenmez, M.B., 2023. Matching with externalities. \textit{Review of Economic Studies}, 90, 948-974.}

\bibitem[Rostek and Yoder(2020)]{RY20}{\small Rostek, M., Yoder, N., 2020. Matching with complementary contracts. \textit{Econometrica}, 88(5), 1793-1824.}

\bibitem[Roth and Sotomayor(1990)]{RS90}{\small Roth, A.E., Sotomayor, M., 1990. Two-sided matching: A study in game-theoretic modelling and analysis. Econometric Society Monographs No. 18, Cambridge University Press, Cambridge England.}
    
\bibitem[Scarf(1967)]{S67}{\small Scarf, H. E., 1967. The core of an $n$-person game. \textit{Econometrica}, 35, 50-69.}

\bibitem[Shapley(1967)]{Sh67}{\small Shapley, L. S., 1967. On balanced sets and cores. \textit{Naval Research Logistics Quarterly}, 14, 453-460.}

\bibitem[Shapley and Shubik(1971)]{SS71}{\small Shapley, L.S., Shubik, M., 1971. The assignment game I: The core. \textit{International Journal of Game Theory}, 1(1), 111-130.}

\bibitem[Sun and Yang(2006)]{SY06}{\small Sun, N., Yang, Z., 2006. Equilibria and indivisibilities: Gross substitutes and complements. \textit{Econometrica}, 74, 1385-1402.}

\bibitem[Tan(1991)]{T91}{\small Tan, J. J. M., 1991. A necessary and sufficient condition for the existence of a complete stable matching. \textit{Journal of Algorithms}, 12, 154-178.}

\bibitem[Tran and Yu(2019)]{TY19}{\small Tran, N.M., Yu, J., 2019. Product-mix auctions and tropical geometry. \emph{Mathematics of Operations Research}, 44(4), 1145-1509.}

\bibitem[Vohra(2005)]{V05}{\small Vohra, R., 2005. Advanced mathematical economics. Routledge, London and New York.}

\end{thebibliography}
\end{document}